\newtheorem{theorem}{Theorem}[section]
\newtheorem{lemma}[theorem]{Lemma}
\newtheorem{corollary}[theorem]{Corollary}
\newenvironment{proof}{{\bf Proof:\ }}{\hfill$\Box$\medskip}
\newcommand{\ignore}[1]{}
\newcommand{\remove}[1]{}
\newcommand{\etal}{{et al.\ }}
\newcommand{\fair}{{fair}}
\newcommand{\naive}{{na\"{i}ve}}
\newcommand{\WHILE}{{\tt while}}
\newcommand{\REPEAT}{{\tt repeat}}
\newcommand{\UNTIL}{{\tt until}}
\newcommand{\FOR}{{\tt for}}
\newcommand{\TO}{{\tt to}}
\newcommand{\IF}{{\tt if}}
\newcommand{\ELSE}{{\tt else}}
\newcommand{\RETURN}{{\tt return}}
\newcommand{\NOT}{{\tt not}}
\newcommand{\OR}{{\tt or}}
\newcommand{\link}{\mbox{{\tt link}}}
\newcommand{\cut}{\mbox{{\tt cut}}}
\newcommand{\meld}{\mbox{{\tt meld}}}
\newcommand{\hinsert}{\mbox{{\tt insert}}}
\newcommand{\hdelete}{\mbox{{\tt delete}}}
\newcommand{\deletemin}{\mbox{{\tt delete}-{\tt min}}}
\newcommand{\makeheap}{\mbox{{\tt make}-{\tt heap}}}
\newcommand{\makeitem}{\mbox{{\tt make}-{\tt item}}}
\newcommand{\decreasekey}{\mbox{{\tt decrease}-{\tt key}}}
\newcommand{\heads}{\mbox{{\tt heads}}}
\newcommand{\decreaseranks}{\mbox{{\tt decrease}-{\tt ranks}}}
\newcommand{\addchild}{\mbox{{\tt add}-{\tt child}}}
\newcommand{\decrement}{\mbox{{\tt decrement}}}
\newcommand{\findmin}{\mbox{{\tt find}-{\tt min}}}
\newcommand{\Makeheap}{\mbox{{\tt Make}-{\tt heap}}}
\newcommand{\Hinsert}{\mbox{{\tt Insert}}}
\newcommand{\Hdelete}{\mbox{{\tt Delete}}}
\newcommand{\Meld}{\mbox{{\tt Meld}}}
\newcommand{\Deletemin}{\mbox{{\tt Delete}-{\tt min}}}
\newcommand{\Findmin}{\mbox{{\tt Find}-{\tt min}}}
\newcommand{\Decreasekey}{\mbox{{\tt Decrease}-{\tt key}}}
\newcommand{\NULL}{\mbox{\it null}}
\newcommand{\info}{\mbox{\it info}}
\newcommand{\before}{\mbox{{\it before}}}
\newcommand{\after}{\mbox{{\it after}}}
\newcommand{\passive}{\mbox{{\it passive}}}
\newcommand{\rank}{\mbox{{\it rank}}}
\newcommand{\maxrank}{\mbox{{\it max\/}-{\it rank}}}
\newcommand{\marked}{\mbox{\it marked}}
\newcommand{\unmarked}{\mbox{\it unmarked}}
\newcommand{\kk}{k}
\newcommand{\ROOT}{\mbox{{\sl root}}}
\newenvironment{mytabbing}
  {\setlength{\topsep}{0pt}%
   \setlength{\partopsep}{5pt}%
   \tabbing}
  {\endtabbing}
\begin{document}

\title{Fibonacci Heaps Revisited}

\author{Haim Kaplan\thanks{Blavatnik School of
Computer Science, Tel Aviv University,
  Israel. Research supported by
The Israeli Centers of Research Excellence (I-CORE) program (Center
No. 4/11). E-mail: {\tt haimk@post.tau.ac.il}.}
\and Robert E. Tarjan\thanks{Department of Computer Science, Princeton University,
Princeton, NJ 08540 and Microsoft Research Silicon Valley, Mountain View, CA
94043.} \and Uri Zwick\thanks{Blavatnik School of
Computer Science, Tel Aviv University,
  Israel. Research supported by BSF grant no.\ 2012338 and by
The Israeli Centers of Research Excellence (I-CORE) program (Center
No. 4/11). E-mail: {\tt
    zwick@tau.ac.il}.}}

\date{}

\maketitle

\pagestyle{empty}
\thispagestyle{empty}

\begin{abstract}\noindent
The Fibonacci heap is a classic data structure that supports deletions in logarithmic amortized time and all other heap operations in $O(1)$ amortized time.  We explore the design space of this data structure.  We propose a version with the following improvements over the original: (i) Each heap is represented by a single heap-ordered tree, instead of a set of trees. (ii) Each decrease-key operation does only one cut and a cascade of rank changes, instead of doing a cascade of cuts. (iii) The outcomes of all comparisons done by the algorithm are explicitly represented in the data structure, so none are wasted.  We also give an example to show that without cascading cuts or rank changes, both the original data structure and the new version fail to have the desired efficiency, solving an open problem of Fredman.  Finally, we illustrate the richness of the design space by proposing several alternative ways to do cascading rank changes, including a  randomized strategy related to one previously proposed by Karger.  We leave the analysis of these alternatives as intriguing open problems.
\end{abstract}

\newpage

\section{Introduction}\label{sec:intro}

A \emph{heap} is a data structure consisting of a set of \emph{items}, each with a key selected from a totally ordered universe.  Heaps support the following operations:

\begin{description}
\item[$\makeheap( )$:] Return a new, empty heap.

\item[$\findmin(H):$] Return an item of minimum key in heap~$H$.  If~$H$ is empty, return null.

\item[$\hinsert(x, H)$:]
Return a heap formed from heap~$H$ by inserting
item~$x$, with predefined key. Item~$x$ must be in no heap.

\item[$\deletemin(H)$:]
Return a heap formed from non-empty heap $H$  by deleting
the item returned by\\ $\findmin(H)$.

\item[$\meld(H_1, H_2)$:] Return a heap containing all items in item-disjoint heaps $H_1$ and~$H_2$.

\item[$\decreasekey(x, v, H)$:] Given that~$x$ is an item in heap~$H$ with key no less than $v$,
return a heap formed from $H$ by changing
the key of~$x$ to~$v$.

\item[$\hdelete(x, H):$] Given that~$x$ is an item in heap~$H$,
return a heap formed by deleting
$x$ from~$H$.
\end{description}

The original heap~$H$ passed to $\hinsert$, $\deletemin$, $\decreasekey$, and $\hdelete$, and the heaps~$H_1$ and~$H_2$ passed to $\meld$, are destroyed by the operations.
Heaps do \emph{not} support search by key; operations \decreasekey\ and \hdelete\ are given the location of item~$x$ in heap~$H$ as part of the input.  The parameter~$H$ can be omitted from \decreasekey\ and \hdelete, but then to make \meld\ operations efficient one needs a separate disjoint set data structure to keep track of the partition of items into heaps.  For further discussion of this see \cite{KaShTa02b}.

\pagestyle{plain}
\setcounter{page}{1}

Fredman and Tarjan \cite{FrTa87} invented the \emph{Fibonacci heap}, an implementation of heaps that supports \deletemin\ and \hdelete\ on~$n$-item heaps in $O(\log n)$ amortized time and each of the other operations in $O(1)$ amortized time.  Applications of Fibonacci heaps include a fast implementation of Dijkstra's shortest path algorithm \cite{Di59,FrTa87} and fast algorithms for undirected and directed minimum spanning trees \cite{Edmonds67,GaGaSpTa86}.  Since the invention of Fibonacci heaps, a number of other heap implementations with the same amortized time bounds have been proposed \cite{Brodal96,BrLaTa12,Chan13,Elmasry10,HaSeTa0011,Hoyer95,KaTa08,Peterson87,Takaoka03}.  Notably, Brodal \cite{Brodal96} invented a very complicated heap implementation that achieves the time bounds of Fibonacci heaps in the worst case.  Brodal \etal \cite{BrLaTa12} later simplified this data structure, but it is still significantly more complicated than any of the amortized-efficient structures.  For further discussion of these and related results, see \cite{HaSeTa0011}.

In spite of the many competitors to Fibonacci heaps, the original data structure remains one of the simplest to describe and implement. We explore the design space of this data structure.
Our contributions are three. First we present a version of Fibonacci heaps in which each heap is represented by a single heap-ordered tree instead of a set of trees, each \decreasekey\ is implemented using cascading rank decreases instead of cascading cuts, and the outcome of each key comparison is explicitly represented in the data structure, so that no comparisons are wasted.  Second, we give an example to show that without cascading cuts or rank decreases, both the original data structure and ours fail to have the desired efficiency, solving an open problem of Fredman~\cite{Fredman05}.  Finally, to illustrate the richness of the design space we propose several alternative ways to do cascading rank decreases, including a  randomized method related to one previously proposed by Karger \cite{Kar06private}. We leave the analysis of these alternatives as intriguing open problems.


The remainder of our paper consists of six sections.  Section~\ref{sec:simple} describes our data structure.  Section~\ref{sec:analysis} analyzes it.  Section~\ref{sec:implementation} presents an implementation. Section~\ref{sec:cascading} gives an example showing that cascading is necessary to make Fibonacci heaps efficient, answering an open question of Fredman~\cite{Fredman05}. Section~\ref{sec:simpler} explores alternative ways to do cascading rank decreases.
Section \ref{sec:remarks} contains final remarks.

\section{Simple Fibonacci Heaps}\label{sec:simple}

We obtain our version of Fibonacci heaps by refining a well-known generic heap implementation.  We represent a heap by a rooted tree whose nodes are the heap items.  We access the tree via its root.  The tree is heap-ordered: each child has key no less than that of its parent.  Heap order implies that the root is an item of minimum key.  To do a \findmin\ on a heap, we return the root of its tree.

We do the update operations using a primitive called \emph{linking}.  Given the roots of two trees, we link them by comparing their keys and making the root of smaller key the parent of the root of the larger key, breaking a tie arbitrarily.  To make a heap, we create a new, empty tree.  To meld two heaps, if one is empty we return the other; if both are non-empty, we link the roots of their trees and return the resulting tree.  To insert an item into a heap, we make it into a one-node tree and meld this tree with the tree representing the heap.

We do \deletemin\ by repeated linking.  First, we delete the root of the tree representing the heap, making each of its children into the root of a separate tree.  If the root had no children, the heap is now empty.  Otherwise, we repeatedly link two roots until only one tree remains, and return this tree.

To decrease the key of an item~$x$ in a heap, we begin by replacing the key of~$x$.  If~$x$ is the root of its tree, this completes the operation.  Otherwise, we break the tree containing~$x$ into two trees, one containing all descendants of~$x$ (including $x$), the other containing the rest of the nodes in the tree.  We call this a \emph{cut} of~$x$.  We complete the operation by linking~$x$ with the root of the original tree.

To delete an arbitrary item~$x$ from its heap, we decrease its key to a value less than all other keys, and then do a \deletemin.

The efficiency of this implementation depends on how links are done in \deletemin.  (This is the only flexibility in the implementation.)  To keep the number of links small, we give each node $x$ a non-negative integer rank, denoted by $x.\rank$. We use ranks in a special kind of link called a \emph{\fair\ link}.  A \fair\ link of two roots can be done only if the roots have equal rank. Such a link increases the rank of the root of smaller key by 1 and makes this root the parent of the root of larger key, breaking a tie arbitrarily.  In contrast, a \emph{\naive\ link} ignores the ranks and merely makes the root of smaller key the parent of the root of larger key, without changing any ranks.

When linking two roots of the same rank, we have the choice of doing either a \fair\ link or a \naive\ link.  We do \fair\ links only when necessary to guarantee efficiency, namely only in \deletemin\ operations, since they require the extra step of changing a rank.

In addition to a rank, we give each node a state, either \emph{unmarked} or \emph{marked}.  We represent the state of a node $x$ by a Boolean variable $x.state$ that is true if and only if $x$ is marked. We use Boolean constants \emph{unmarked} and \emph{marked} whose values are \emph{false} and \emph{true}, respectively.
We refine the generic heap implementation as follows.  Each node added by an insertion has an initial rank of 0 and is initially unmarked.  Each link during an \hinsert\ or \meld\ is a \naive\ link.  During a \deletemin, as long as there are two roots of the same rank, we do a \fair\ link of two such roots.  Once all remaining roots have different ranks, we link them in any order by \naive\ links.
  When decreasing the key of an item~$x$ in a heap with root $h \ne x$, we unmark $h$ and then walk up the path in the tree from the parent of~$x$ through ancestors, unmarking each marked node reached and decreasing its rank by 1 if its rank is positive, until reaching an unmarked node.  We mark this node and decrease its rank by 1 if its rank is positive. Then we  cut~$x$,  and link~$x$ and $h$ via a \naive\ link.

To make the cascading rank change process completely transparent, we give a fragment of pseudo-code that implements it.  We denote the parent of a node~$x$ by $x.parent$.  The following code does the rank and state changes, assuming $x\ne h$:

\vspace*{-20pt}
\begin{center}
\parbox[t]{2.5in}{
\begin{tabbing}
aaa\=aaa\=aaa \kill
$h.state \gets \unmarked$ \\
$y \gets x$\\
\REPEAT:\\
\{\>  $y \gets y.parent$  \\
\>    \IF\ $y.\rank > 0$: $y.\rank \gets y.\rank - 1$ \\
\>    $y.state \gets \NOT\ y.state$ $\;\}$ \\
\UNTIL\ $y.state=\marked$ \\
\end{tabbing}
}
\end{center}

\vspace*{-20pt}
We call this data structure the \emph{simple Fibonacci heap}.
There are three major differences between it and the original version of Fibonacci heaps. First, the original does only \fair\ links, not \naive\ links. This makes each heap a set of trees rather than just one tree, accessed via a pointer to a root of minimum key.
Second, during the cascading process in a \decreasekey\ operation, the original method cuts each marked node reached, as well as unmarking it and decreasing its rank.  But except for cutting the node whose key decreases, these cuts are superfluous, as our analysis in the next section shows.
Third, our data structure maintains the outcomes of all key comparisons in the data structure itself, via fair and \naive\ links.  In the original, maintenance of a pointer to the root of minimum key requires comparisons among root keys whose outcomes are not explicitly maintained.  Also, when decreasing the key of $x$, $x$ is cut only if $x$ is not a root and its new key is less than that of its parent.  The outcome of this extra comparison is also not explicitly maintained.

\section{Analysis of Simple Fibonacci Heaps}\label{sec:analysis}

Our analysis uses the concept of \emph{active} and \emph{passive} children.  A root that becomes a child
by a \fair\ link is active; one that becomes a child by a  \naive\ link is passive.
  An active child that is marked and then unmarked becomes passive when it is unmarked; it remains passive until it becomes a root and then a child by a fair link once again.  We use active and passive children in the analysis only, not in the algorithm.

\begin{lemma} \label{lem:active}
For any node $x$, $x$ has at least $x.\rank$ active children.
\end{lemma}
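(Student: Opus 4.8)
The plan is to prove this by tracking how the rank of a node and its number of active children evolve together as the algorithm runs, and to show by induction on the length of the execution that the invariant $x.\rank \le (\text{number of active children of }x)$ is maintained by every operation. The key observation is that only three kinds of events change either side of this inequality for a given node $x$: (i) a fair link in which $x$ is the surviving root, (ii) a fair link in which $x$ becomes a child (so $x$ joins some other node as an active child), and (iii) the cascading rank-change walk in a \decreasekey, which can decrement $x.\rank$ and can turn an active child of $x$ passive, or can make $x$ itself passive. Naive links and cuts are essentially irrelevant: a naive link adds a passive child and changes no ranks; a cut removes a subtree but the only node whose rank changes during \decreasekey\ is handled by the cascade, and losing a child $x$ (together with its subtree) does not affect the invariant at $x$ itself, while for the old parent of $x$, that child was passive exactly when $x$ was unmarked --- this is the subtle bookkeeping point.

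First I would set up the induction and dispose of the easy cases. When a node is created by \hinsert\ it has rank $0$ and no children, so the invariant holds. Naive links (in \hinsert, \meld, and the tail of \deletemin) create only passive children and change no ranks, so they preserve the invariant at every node. A cut in \decreasekey\ removes the subtree rooted at $x$ from its parent $p$; I need to argue this does not violate the invariant at $p$, which follows because at the moment of the cut $x$ is passive (by the definition of passive, the cascade has already unmarked $x$'s relevant ancestor, and more to the point $x$, having been reached as the child being cut, is being detached --- but the real content is that $x$ contributes to $p.\rank$ only if $x$ is active, and I must confirm the algorithm never cuts an active child except... well, it always cuts $x$ regardless of $x$'s state, so I actually need $x.\rank$ of $p$ to have been decremented beforehand). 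This is where the cascading walk does its job: the walk from $x.parent$ upward decrements the rank of each node it passes through (when positive) before the cut happens, and in particular decrements $x.parent.\rank$. So the cut removes one child from $p$ and the cascade has already removed one from $p.\rank$; I must check these are consistent, i.e.\ that if $x$ was an active child of $p$ then $p.\rank$ was positive so it actually got decremented.

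The heart of the argument is therefore the cascading rank-change procedure together with the definition of ``active''/``passive'' transitions. I would argue: the walk visits a sequence of ancestors $y$; for each marked $y$ it unmarks it (so any active children of $y$ that were marked... no --- unmarking $y$ itself turns $y$ from active to passive as a child of \emph{its} parent, and decrements $y.\rank$, which is the self-consistent move), and for the final unmarked $y$ it marks it and decrements its rank if positive. In every case the rank of each touched node drops by at most $1$ and (for the terminal node, which stays a child) its contribution as an active child drops by exactly... zero, since marking does not change active/passive status --- only unmarking an active marked child does. So for the terminal node the left side drops by $1$ (if it was positive) while the right side is unchanged: fine. For each intermediate node $y$ that gets unmarked: $y.\rank$ drops by $1$ (if positive) and simultaneously $y$ becomes passive as a child of $y.parent$, reducing $y.parent$'s active-child count by $1$ --- so I must verify the invariant at $y.parent$, using that $y.parent.\rank$ will also be decremented when the walk reaches it (or that $y.parent.\rank$ was large enough). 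The delicate point, and the one I expect to be the main obstacle, is handling the boundary where a node's rank is $0$ and so is \emph{not} decremented even though its state flips: I need the invariant ``$x.\rank \le \#\text{active children}$'' to survive even when the left side stubbornly stays at $0$; fortunately $0 \le \text{anything}$, so the only real danger is a node whose rank is positive losing an active child without its rank dropping, and I must show the cascade's ``decrement if positive'' clause exactly prevents that. Finally, the fair link during \deletemin\ increases the surviving root's rank by $1$ and simultaneously gives it a new active child (the other root), so both sides go up by $1$ and the invariant is preserved; for the node that becomes a child, its own rank and children are untouched. Assembling these cases completes the induction.
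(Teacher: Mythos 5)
Your proposal is correct and is essentially the paper's own argument: an induction over the execution maintaining the invariant $x.\rank \le \#\{\text{active children of }x\}$, observing that the only rank increase is a fair link (which also adds an active child) and that whenever an active child is lost---by the cut in \decreasekey\ or by an unmarking in the cascade---the walk has decremented the parent's rank unless it was already $0$. The paper states this in three sentences; your version just spells out the case analysis (and the parenthetical ``passive exactly when unmarked'' aside is inaccurate, since a passive child can be marked, but you correctly discard it and rely on the right fact, namely the ``decrement if positive'' clause at the parent).
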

\begin{proof}
The only way $x.\rank$ can increase is via a fair link, which adds an active child.  When $x$ loses an active child, either by a cut or because a child becomes unmarked and hence passive, $x.\rank$ decreases by~1 unless it is already 0.  The lemma follows by induction on time.
\end{proof}

Let the \emph{size} of a node $x$, denoted by $x.size$, be its number of descendants, including itself.  We prove that the rank of a node is at most logarithmic in its size.  Recall the definition of the Fibonacci numbers: $F_0 = 0$, $F_1 = 1$, $F_\kk  =
F_{\kk - 1} + F_{\kk - 2}$ for $\kk \ge 2$.  The Fibonacci numbers satisfy $F_{\kk + 2} \ge \phi^\kk$, where $\phi= (1 + \sqrt{5})/2$ is the \emph{golden ratio} \cite{Knuth98}.

\begin{lemma}\label{lem:rank}
Let $n_\kk$ be the minimum size of a node of rank at least $\kk$.  Then $n_\kk \ge \phi^\kk$.
\end{lemma}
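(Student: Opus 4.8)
The plan is to mimic the classic Fibonacci-heap argument, but using \emph{active} children in place of all children, since Lemma~\ref{lem:active} guarantees a node of rank $k$ has at least $k$ active children. The key structural claim to establish first is the following: if $x$ is a node and $y_1, y_2, \ldots, y_k$ are $k$ of its active children, listed in the order in which they became children of $x$ by fair links, then $y_i.\rank \ge i-2$ at all times (in particular right now). To see why this should hold, consider the moment the fair link making $y_i$ a child of $x$ occurred: at that instant $x.\rank$ was $i-1$ is not quite right, but $x$ had at least $i-1$ active children already (namely $y_1,\ldots,y_{i-1}$, which became children earlier and, being active children we are still counting, have not been cut or passivated since), so by Lemma~\ref{lem:active} applied at that moment $x.\rank \ge i-1$; a fair link requires equal ranks, so $y_i$ also had rank $\ge i-1$ just before the link, hence rank $\ge i-1$ just after. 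Since then, $y_i$'s rank can have decreased: each cascading rank decrease (or cut of one of $y_i$'s active children) drops it by at most $1$, but once $y_i$ has been touched by the cascading process and \emph{unmarked}, it becomes passive and we stop counting it — so as long as $y_i$ is still an active child of $x$, its rank has dropped by at most $1$ from a decrease-key touch, plus the loss is already accounted for in the Lemma~\ref{lem:active} bookkeeping. The cleanest way to phrase this is: while $y_i$ remains an active child of $x$, it has lost at most one "mark-and-decrement" event (the one that would passivate it on the \emph{next} such event), so $y_i.\rank \ge (i-1) - 1 = i-2$.

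Given that claim, define $n_k$ as in the statement and let $S_k$ be the minimum size of a node of rank \emph{exactly} $k$ that moreover has exactly $k$ active children chosen minimally; a short argument shows the minimum over rank-$\ge k$ nodes is achieved at a rank-exactly-$k$ configuration, so $n_k = S_k$. Now take a node $x$ of rank $k$ realizing $S_k$, with active children $y_1,\ldots,y_k$ in link order. Then
\[
n_k \;=\; x.size \;\ge\; 1 + \sum_{i=1}^{k} y_i.size \;\ge\; 1 + \sum_{i=1}^{k} n_{\max(i-2,0)} \;=\; 2 + \sum_{i=2}^{k} n_{i-2} \;=\; 2 + \sum_{j=0}^{k-2} n_j,
\]
using $y_1.size, y_2.size \ge 1 = n_0$ and $y_i.size \ge n_{i-2}$ for $i \ge 2$. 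This is the familiar Fibonacci recurrence in disguise: with $n_0 = 1$, $n_1 = 2$, and $n_k \ge 2 + \sum_{j=0}^{k-2} n_j$, one proves by induction that $n_k \ge F_{k+2}$, and then $n_k \ge F_{k+2} \ge \phi^k$ by the stated inequality on Fibonacci numbers. The inductive step is routine: $n_k \ge 2 + \sum_{j=0}^{k-2} F_{j+2} = 2 + (F_{k+2} - 2) = F_{k+2}$, using $\sum_{j=0}^{m} F_{j+2} = F_{m+4} - 2$... (I will need to pin down the exact Fibonacci summation identity, but it is standard).

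The main obstacle is making the structural claim fully rigorous in the presence of cascading rank decreases rather than cascading cuts. In the original Fibonacci heap, a node loses at most one child between becoming a child and being cut, which pins its rank cleanly; here the analogue is that an active child, once it has absorbed one rank decrease from the cascade, is \emph{unmarked and becomes passive}, so it drops out of the "active child" count we are tracking. I need to verify carefully that the interleaving of (a) $x$ acquiring active children via fair links, (b) those children losing rank via cascades initiated lower in the tree, and (c) those children being passivated, cannot conspire to push some $y_i.\rank$ below $i-2$ while $y_i$ is still counted among $x$'s active children. The right bookkeeping is an invariant maintained over time: for each node $x$, if we list its current active children in link order, the $i$-th has rank $\ge i-2$; I would prove this invariant is preserved by each of fair link, naive link, cut, and the cascading mark/rank-decrease loop, checking each case against the pseudo-code. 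Everything after that invariant is the standard Fibonacci calculation.
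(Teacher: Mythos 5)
Your overall architecture (bound the ranks of the active children, get the recurrence $n_k \ge 2 + \sum_{j=0}^{k-2} n_j$, finish with the Fibonacci induction) is the paper's, and the pieces ``a fair link requires equal ranks'' and ``an active child's rank can drop by at most $1$ before it is passivated'' are used correctly. But the central structural claim, as you state and justify it, has a genuine gap. You invoke Lemma~\ref{lem:active} in the wrong direction: it says a node has \emph{at least} $x.\rank$ active children, i.e.\ the number of active children is bounded \emph{below} by the rank; it does not let you conclude from ``$x$ already had $i-1$ active children'' that $x.\rank \ge i-1$ at the moment $y_i$ was linked. In this data structure that converse is simply false, because $x.\rank$ can be decremented by cascades that do not cost $x$ any active child (a \decreasekey\ on a \emph{passive} child of $x$ cuts that child and decrements $x.\rank$). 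Concretely, let $x$ be a root with one passive child $p$, and repeat: fair-link $x$ (rank $0$) with a fresh rank-$0$ root $y_t$, raising $x.\rank$ to $1$; then do a \decreasekey\ on $p$, whose cascade visits $x$ and drops $x.\rank$ back to $0$, after which $p$ is naive-linked back to $x$ as a passive child. After $m$ rounds $x$ has $m$ active children, all of rank $0$, listed oldest-first as $y_1,\dots,y_m$, so $y_i.\rank = 0 < i-2$ for $i\ge 3$. Thus the invariant you propose to maintain (``the $i$-th active child in link order has rank $\ge i-2$'', for an arbitrary choice of $k$ active children and independent of the current rank) is not just unproven but false, and the worry you flag at the end is exactly where the argument breaks.

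The fix is the step the paper takes and your write-up is missing: anchor the argument on the \emph{current} rank and order the active children \emph{most recent first}. If $x.\rank \ge k$ now and $y$ is the $i$-th most recently linked currently-active child, then since $y$ was linked, every increase of $x.\rank$ came from acquiring an active child, and all but $i-1$ of those later acquisitions have since been cut or passivated, each cancelling its increment; hence the net increase since that link is at most $i-1$, so $x.\rank$ was at least $k-i+1$ just after the link, giving $y.\rank \ge k-i$ at link time and $\ge k-i-1$ now. This bounds the ranks of the $k$ most recent active children relative to the current rank $k$ (in my example above those are the freshly linked ones, not the old rank-$0$ ones), which is exactly what the recurrence needs; your oldest-first, rank-independent formulation cannot be repaired without this change of viewpoint. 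The remainder of your proof (the summation and the induction $n_k \ge F_{k+2} \ge \phi^k$) then goes through as in the paper.
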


\begin{proof}
Clearly $n_0 = 1$ and $n_1 = 2$.  Let $x$ be a node of rank at least $\kk\ge 2$ such that $x.size = n_\kk$.
By Lemma~\ref{lem:active}, $x$ has at least $\kk$ active children. Arrange these children in the order they were linked to $x$, most recent first. Fix $i < \kk$. Let $y$ be the $i^{\rm th}$ active child of $x$ in this order. We claim that just after $x$ acquires $y$ as a child, $x.\rank \ge \kk - i + 1$. To prove the claim, we observe that after $x$ acquires $y$ as a child, $x.\rank$ increases only when it acquires an active child, and of the active  children acquired after $y$, only $i - 1$  are currently active.  For  every other active child acquired after $y$, $x.\rank$ first increases and then decreases by one, since each such child either becomes passive or is cut by a \decreasekey\ operation.  Hence the net increase in $x.\rank$ from the time $y$ becomes a child of $x$ to the current time is at most $i - 1$. The claim follows.

The claim implies that $y.\rank \ge \kk - i$ when $y$ becomes a child of $x$.  Subsequently $y.\rank$ can decrease by at most 1, or $y$ would become passive.  Hence $y.\rank \ge \kk - i - 1$, which implies $y.size \ge n_{\kk - i - 1}$.  Applying this bound to each of the first $\kk - 1$ active children of $x$ and adding 2 for $x$ and its $\kk^{\rm th}$ active child, we obtain $n_\kk = x.size \ge n_{\kk - 2} + n_{\kk - 3} + \ldots + n_0 + 2$.  This implies $n_\kk \ge n_{\kk - 1} + n_{\kk - 2}$ for $\kk \ge 2$.  It follows by induction on $\kk$ that $n_\kk \ge F_{\kk + 2} \ge \phi^\kk$.
\end{proof}

\begin{corollary} \label{cor}
 A node of size $n$ has rank at most $\log_\phi n$.
\end{corollary}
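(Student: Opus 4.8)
The plan is to derive Corollary~\ref{cor} directly from Lemma~\ref{lem:rank} by contraposition. Lemma~\ref{lem:rank} tells us that if a node has rank at least $\kk$, then its size is at least $\phi^\kk$. So suppose a node $x$ has size $n$ and rank $r$. Applying the lemma with $\kk = r$, we get $n = x.size \ge n_r \ge \phi^r$.

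Taking logarithms base $\phi$ of both sides of $n \ge \phi^r$ yields $\log_\phi n \ge r$, i.e., $r \le \log_\phi n$, which is exactly the claimed bound. The only point worth a remark is that $\phi > 1$, so $\log_\phi$ is monotone increasing and the inequality direction is preserved; also $n_r$ is well-defined as a minimum since a node of rank $r$ exists (namely $x$ itself), so the quantity $n_r$ is at most $n$.

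I do not anticipate any real obstacle here: this is a one-line consequence of the preceding lemma, and essentially all the work has already been done in establishing $n_\kk \ge \phi^\kk$. The only thing to be careful about is that the rank $r$ is a nonnegative integer and the statement is vacuous or trivial when $r = 0$ (since $\log_\phi n \ge \log_\phi 1 = 0$ for any $n \ge 1$), so no separate base case is needed.
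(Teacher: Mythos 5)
Your derivation is correct and is exactly how the corollary follows in the paper, which states it without further proof as an immediate consequence of Lemma~\ref{lem:rank}: a node of rank $r$ and size $n$ satisfies $n \ge n_r \ge \phi^r$, hence $r \le \log_\phi n$. Nothing further is needed.
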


To analyze the amortized efficiency of the heap operations, we use the standard potential-based framework \cite{Tarjan85}. We assign to each configuration of the data structure a real-valued {\em potential}. We define the amortized time of an operation to be its actual time (possibly scaled by a constant factor) plus the net increase in potential caused by the operation.  With this definition, the total time of a sequence of operations is equal to the sum of the amortized times of the operations plus the initial potential minus the final potential.  If the initial potential is zero and the final potential is non-negative, then the total actual time of the operations is at most the sum of their amortized times (times the scaling constant if any).

In our analysis we estimate the time of an operation other than a \deletemin, \decreasekey, or \hdelete\ to be 1; that of a \decreasekey\ to be $1 + \#iterations$, where $\#iterations$ is the number of iterations of the rank-decreasing loop; and that of a \deletemin\ to be $1 + \log_\phi n + \#links$, where~$n$ is the number of items in the heap and $\#links$ is the number of links, both \fair\ and \naive, done during the operation.  We treat a \hdelete\ as a \decreasekey\ followed by a \deletemin.  In the next section we present an implementation of the data structure that supports each operation except \deletemin, \decreasekey, and \hdelete\ in $O(1)$ actual time, each \decreasekey\ in $O(1 + \#iterations)$ actual time, and each \deletemin\ in $O(1 + \log_\phi n + \#links)$ actual time, justifying our estimates.

Let the \emph{degree} of a node $x$, denoted by $x.degree$, be the number of children of $x$. We use degrees in the analysis only; they are not part of the algorithm.
We define the potential of a node~$x$ to be $x.degree - x.\rank$,
plus $1$ if $x$ is a root, plus $2$ if $x$ is marked. For any node $x$, $x.degree \ge x.rank$ by Lemma~\ref{lem:active}, so every node has non-negative potential, and every root has potential at least $1$. We define the potential of a set of heaps to be the sum of the potentials of their nodes.  This potential is 0 initially (there are no nodes) and is always non-negative.  Thus in any sequence of heap operations the sum of the amortized times is an upper bound on the sum of their estimated times.

\begin{lemma}\label{lem:amortize}
The amortized time of a \makeheap, \findmin, or \meld\ is $1$, that of an \hinsert\ is $2$, that of a \decreasekey\ is at most $5$, and that of a \deletemin\ on a heap of $n$ items is at most $3\log_\phi n$.
\end{lemma}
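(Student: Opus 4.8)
The plan is to compute, for each operation, the actual (estimated) time plus the net change in potential, bounding the latter using Lemma~\ref{lem:active} (so that $x.degree - x.rank \ge 0$ always) and Corollary~\ref{cor} (so that the root of a heap with $n$ items has rank at most $\log_\phi n$). The potential is a sum of three contributions per node: the ``degree surplus'' $x.degree - x.\rank$, a $+1$ for being a root, and a $+2$ for being marked. I would handle the operations roughly in order of increasing difficulty.

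First the easy ones. A \makeheap\ creates an empty tree, changing nothing, so its amortized time is $1 + 0 = 1$. A \findmin\ reads the root and changes nothing, so again $1$. A \meld\ either returns an existing tree (no change) or does one \naive\ link: the \naive\ link turns one root into a non-root child of another, which decreases potential by $1$ (losing the root bonus) while leaving all degree-surplus terms and marks untouched — the new parent's degree goes up by $1$ but its rank is unchanged, so its surplus increases by $1$, and the net change across the two nodes is $+1 - 1 = 0$; wait, more carefully: the child loses its $+1$ root bonus ($-1$), and the parent's surplus rises by $1$ ($+1$), for net $0$. Hence amortized time $1 + 0 = 1$. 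An \hinsert\ makes a one-node tree (a root, rank $0$, unmarked: potential $1$) and melds it, and by the \meld\ analysis the meld contributes $0$ to the potential change, so the amortized time is $1 + 1 = 2$.

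Now \decreasekey. The estimated time is $1 + \#iterations$ where $\#iterations = j$ is the number of times the \REPEAT\ loop runs. I need to show the net potential change is at most $4 - j$, so the amortized time is at most $5$. Unmarking $h$ (if it was marked) drops potential by $2$ — this is free savings, but I shouldn't rely on it; I'll just note it is $\le 0$. Each of the first $j-1$ iterations visits a marked node, unmarks it (potential $-2$), and decreases its rank by $1$ if positive (surplus $+1$): net $\le -1$ per iteration, so $\le -(j-1)$. The final iteration visits an unmarked node $y$, marks it ($+2$) and decreases its rank by $1$ if positive (surplus $+1$): net $\le +3$. Then the cut of $x$ makes $x$ a root ($+1$) and decreases $x$'s former parent's degree by $1$, lowering that parent's surplus by $1$ ($-1$) — net $0$; but careful, $x$'s former parent is $y$, already accounted for as having degree reduced — I must make sure the cut's decrement of $y.degree$ is counted once. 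Finally linking $x$ and $h$ by a \naive\ link contributes $0$ as computed above. Summing: $\le -2\cdot\mathbf{1}[h\text{ marked}] + (-(j-1)) + 3 + 0 + 0 \le -j + 4$. So the amortized time is at most $(1 + j) + (4 - j) = 5$. The bookkeeping to avoid double-counting the degree decrements of the nodes on the cascade path is the one delicate point.

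Finally \deletemin. The estimated time is $1 + \log_\phi n + \#links$. When we delete the root, its $c$ children become roots, which raises potential by $c$ (each gains the root bonus) and removes the deleted root's potential, which is at least $c - c = 0$ and at most $c + 1$ (degree $c$, rank $\le c$, root bonus $1$, mark $0$ since the cascading code keeps roots unmarked — actually the deleted root may have been marked; but a root's potential is at most $c - 0 + 1 + 2 = c + 3$); deleting it changes potential by $-$(its potential) $\ge -(c+3)$. Each of the $\#links$ links (fair or naive) reduces the number of roots by $1$; a \naive\ link contributes $0$ to potential as shown, and a \fair\ link turns one root into a child ($-1$ root bonus), increases the new parent's rank by $1$ while increasing its degree by $1$ (surplus unchanged, net $0$ there), so a \fair\ link contributes $-1$; hence the total link contribution is $\le 0$, and in fact $\le -\#fairlinks$. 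The key structural fact is that at the end only one tree remains and all intermediate roots had distinct ranks before the final \naive\ links, each rank at most $\log_\phi n$ by Corollary~\ref{cor}; combined with the fact that we do \fair\ links whenever two roots share a rank, the number of roots after the deletion but we only need: $c \le$ (rank of deleted node's parent chain)\dots — actually the clean bound is that $c + \#roots$-type counting gives $\#links \le c - 1 + (\text{stuff})$ and $c \le \log_\phi n$ is false in general, so instead I bound the potential released by the $c$ new roots against the links. Let me restate: the net potential change is at most $c$ (new roots) $- 0$ (deleted node potential, $\ge 0$) $- \#fairlinks$. Since each \naive\ link in the final phase involves roots of distinct ranks and there are at most $\log_\phi n + 1$ distinct ranks, the number of final \naive\ links is at most $\log_\phi n$, and $\#fairlinks = \#links - \#naivelinks \ge \#links - \log_\phi n$; also $\#links = (\text{initial roots}) - 1 = c - 1$ wait the initial root set after removing the min is the $c$ children plus — no, \deletemin\ operates on a single tree, so after removing the root there are exactly $c$ roots, and linking reduces this to $1$, so $\#links = c - 1$. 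Then net potential change $\le c - (c - 1 - \log_\phi n) = 1 + \log_\phi n$, and the amortized time is at most $(1 + \log_\phi n + (c-1)) + (1 + \log_\phi n) $ — that still has a $c$ in it through $\#links$; I need $\#links = c-1$ cancelled by $-\#fairlinks \ge -(c - 1 - \log_\phi n)$, giving estimated $+ $ potential $\le (1 + \log_\phi n + c - 1) + (c - (c - 1 - \log_\phi n)) = \log_\phi n + c + 1 + \log_\phi n - c + 1 = 2\log_\phi n + 2 \le 3\log_\phi n$ for $n$ large enough (and absorbing small cases into the scaling constant). The main obstacle throughout is precise bookkeeping of degree and root-bonus changes under each primitive (\naive\ link, \fair\ link, cut), and in \deletemin\ the step of bounding the number of final \naive\ links by the number of distinct ranks, which is where Corollary~\ref{cor} enters.
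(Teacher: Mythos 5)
Your handling of \makeheap, \findmin, \meld, \hinsert, and \decreasekey\ is essentially the paper's argument (the aside that ``$x$'s former parent is $y$'' is only true when the loop runs once---$x.parent$ is the \emph{first} node visited, $y$ the last---but your accounting, with the degree decrement counted once at the cut and the rank decrement once in the loop, is correct and matches the paper).

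The \deletemin\ case has a genuine gap. You lower-bound the deleted root's potential by $0$ (``deleted node potential, $\ge 0$''), keeping only the $+c$ contributed by its $c = h.degree$ children becoming roots. The paper instead uses that deleting $h$ releases at least $1 + h.degree - h.\rank$ units of potential (the degree surplus, nonnegative by Lemma~\ref{lem:active}), so the $+c$ is cancelled and the deletion step costs at most $h.\rank - 1 \le \log_\phi n - 1$ by Corollary~\ref{cor}. This cancellation is essential: $c$ is \emph{not} $O(\log n)$, since \naive\ links (from \hinsert, \meld, and \decreasekey) pile children onto the root without raising its rank, so the root's degree can be $\Theta(n)$. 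Your attempted rescue via $\#links = c-1$ and $\#fairlinks \ge c - 1 - \log_\phi n$ does not remove it: evaluated correctly, your bound is $\bigl(1 + \log_\phi n + (c-1)\bigr) + \bigl(c - (c - 1 - \log_\phi n)\bigr) = 2\log_\phi n + c + 1$, not $2\log_\phi n + 2$; the displayed simplification silently drops a $c$. (Even the claimed $2\log_\phi n + 2$ would not give the stated $3\log_\phi n$ for small $n$, but that is secondary.) The repair is exactly the paper's step: charge the $c$ new root bonuses against the deleted root's surplus $h.degree - h.\rank$ plus its root bonus, so the deletion increases the potential by at most $h.\rank - 1 \le \log_\phi n - 1$, and then combine with $\#fairlinks \ge \#links - \log_\phi n$ (each \fair\ link releasing one unit) to obtain the amortized bound $(1 + \log_\phi n + \#links) + (2\log_\phi n - 1 - \#links) = 3\log_\phi n$; note this route never needs the identity $\#links = c - 1$ at all.
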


\begin{proof}
A \naive\ link does not change the potential: the potential of the root that becomes a child decreases by 1 but that of the root that remains a root increases by 1 (its degree increases and its rank does not change).  A \fair\ link decreases the potential by 1: in this case the potential of the root that remains a root does not change, since both its degree and its rank increase by 1.  A \makeheap, \findmin, or \meld\ does not change the potential, so the amortized time of such an operation equals its estimated time of 1.  An insert creates a new root, of potential 1, making its amortized time 2 whether or not it does a \naive\ link.

A \decreasekey\ of a root does not change the potential and hence has an amortized time of 1.  Consider a \decreasekey\ of a non-root~$x$.
If the root is marked, unmarking it decreases the potential.
Each iteration of the rank-decreasing loop except the last  unmarks a node $y$ and may decrease its rank by 1, decreasing the potential by at least $2 - 1 = 1$. The last iteration marks a node and may decrease its rank by one, increasing the potential by at most 3. Cutting~$x$ from its parent does not change the potential: the potential of~$x$ increases by 1 since it becomes a root, but that of its parent decreases by 1, since its degree decreases by 1 and its rank does not change.  (Its rank may have decreased by 1 earlier, but we have already accounted for this.)  Combining these observations, we find that the \decreasekey\ increases the potential by at most $3 - (\#iterations-1)$, making the amortized time of the operation at most $(1+\#iterations)+(4-\#iterations)=\,5$.

Finally, consider a \deletemin\ on a heap of $n > 0$ items.  Let $h$ be the root of the tree representing the heap.  Deleting $h$ increases the potential by at most $h.\rank - 1$: deletion of $h$ reduces the potential by at least $1 + h.degree - h.\rank$, but each of the $h.degree$ new roots increases in potential by 1.
By Corollary~\ref{cor}, $h.\rank \le \log_\phi n$, so deleting $h$  increases the potential by
 at most $\log_\phi n -1$.  Each \fair\ link reduces the potential by 1. Also by Corollary~\ref{cor}, at most $\log_\phi n$ of the links are \naive, so at least $\#links - \log_\phi n$  are \fair.  We conclude that the entire \deletemin\ increases the potential by at most $\log_\phi n - 1 + \log_\phi n - \#links$, making the amortized time of the operation at most $(1+\log_\phi n + \#links) + (2\log_\phi n -1 - \#links)=\,3\log_\phi n$.
\end{proof}


\begin{theorem}\label{thm:amort}
The total estimated time of a sequence of heap operations starting with no heaps is $O(1)$ per operation other than \deletemin\ and \hdelete, and $O(\log n)$ per \deletemin\ and \hdelete, where~$n$ is the number of items currently in the heap.
\end{theorem}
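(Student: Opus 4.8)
The plan is to derive Theorem~\ref{thm:amort} directly from Lemma~\ref{lem:amortize} using the standard accounting of the potential-based framework described just before the lemma. The key observation is that Lemma~\ref{lem:amortize} already bounds the amortized time of every \emph{individual} primitive operation by a constant, except for \deletemin, which is bounded by $3\log_\phi n$; and a \hdelete\ was defined to be a \decreasekey\ followed by a \deletemin, so its amortized time is at most $5 + 3\log_\phi n = O(\log n)$.

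First I would recall that we start with no heaps, so the initial potential is $0$, and that the potential is always non-negative (each node has non-negative potential by Lemma~\ref{lem:active}, and roots and marked nodes only add to it), so the final potential is non-negative. Hence, by the telescoping property of amortized analysis, the total estimated time of any sequence of operations is at most the sum of their amortized times. Second, I would sum the per-operation amortized bounds from Lemma~\ref{lem:amortize}: each \makeheap, \findmin, \meld, or \hinsert\ contributes $O(1)$; each \decreasekey\ contributes at most $5 = O(1)$; each \deletemin\ on a heap of $n$ items contributes at most $3\log_\phi n = O(\log n)$; and each \hdelete\ on a heap of $n$ items contributes at most $5 + 3\log_\phi n = O(\log n)$. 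Therefore the total estimated time is $O(1)$ per operation other than \deletemin\ and \hdelete, and $O(\log n)$ per \deletemin\ and \hdelete, as claimed.

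One point that needs a word of care is the meaning of ``$n$'' in the statement: it refers to the number of items currently in the heap at the time the \deletemin\ or \hdelete\ is performed, which is exactly the quantity $n$ appearing in the corresponding bound from Lemma~\ref{lem:amortize}. Since $\log_\phi n = O(\log n)$, the translation from the golden-ratio logarithm to the natural (or base-two) logarithm is immediate. There is essentially no obstacle here; the only thing to verify is that the hypotheses of the amortization framework --- zero initial potential and non-negative final potential --- are met, both of which were already established in the paragraph preceding Lemma~\ref{lem:amortize}. Thus the theorem is an immediate corollary of Lemma~\ref{lem:amortize}.
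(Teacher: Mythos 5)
Your proposal is correct and follows the same route as the paper, which simply states that the theorem is immediate from Lemma~\ref{lem:amortize}; you have just spelled out the telescoping argument, the zero initial and non-negative final potential, and the treatment of \hdelete\ as a \decreasekey\ followed by a \deletemin, all of which are exactly the ingredients the paper relies on.
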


\begin{proof}
The theorem is immediate from  Lemma~\ref{lem:amortize}.
\end{proof}


{\bf Remark:} The potential used in the proof of Lemma~\ref{lem:amortize} is like the original \cite{FrTa87} except that potential associated with children is moved to their parents.  This is necessary because Lemma~\ref{lem:active} is an inequality, not an equality.  Alternatively, we can define a child $y$ of a node $x$ to be \emph{hyperactive} if $y$ is one of the $x.\rank$ most recently linked active children of $x$.  Then the number of hyperactive children of a node equals its rank; and, if we give each node 1 unit of potential if it is not a hyperactive child, plus 2 if it is marked, then Lemma~\ref{lem:amortize} holds.  (The potential of a tree is the same under both definitions.)

\section{Implementation}\label{sec:implementation}

Our implementation of simple Fibonacci heaps mimics  the original implementation \cite{FrTa87}. We make each set of children a doubly-linked list: if~$x$ is a node, $x.\after$ and $x.\before$ are its next sibling and its previous sibling, respectively.  We need double linking to do cuts in $O(1)$ time. We also store with each node~$x$  pointers to its parent, $x.parent$, and to its first child on its list of children, $x.child$.  To find \fair\ links to do in \deletemin, we use a global array $A$ of nodes indexed by rank. We assume that the array is initially empty.  After deleting the root, we place each new root into the array location corresponding to its rank.  If this location is already occupied, we do a \fair\ link.  Once all remaining roots have been successfully stored in the array, we scan the array, linking pairs of roots by \naive\ links, until only one root remains, and leave the array empty again.  This implementation supports all the operations in the time bounds claimed in Section~\ref{sec:analysis}. Since the rank of a node increases or decreases by only 1 at a time, we can replace the global array by a global doubly linked list to obtain an implementation on a pointer machine. See~\cite{FrTa87}.
Figure~\ref{fig:pseudo-main} gives pseudo-code implementing the heap operations, as well as $\makeitem(\info,v)$, which creates a heap item having key~$v$ and  associated information~$\info$. Figure~\ref{fig:pseudo-auxiliary} gives pseudo-code implementing the auxiliary functions \link, \cut, \addchild, and \decreaseranks\ used in Figure~\ref{fig:pseudo-main}.

\newcommand{\MAKEHEAP}{
\parbox[t]{2.5in}{
\begin{mytabbing}
aaa\=aaa\=aaa \kill
$\makeheap()$: \\
\>      \RETURN\ $\NULL$
\end{mytabbing}
}}

\newcommand{\MAKEITEM}{
\parbox[t]{2.5in}{
\begin{mytabbing}
aaa\=aaa\=aaa \kill
$\makeitem(\info,v)$: \\
\{ \> $x \gets$ new node \\
\> $x.\info \gets \info$ \\
\> $x.key \gets v$ \\
\> $x.\rank \gets 0$ \\
\> $x.state \gets \unmarked$ \\
\> $x.child \gets \NULL$ \\
\>   \RETURN\ $x$ $\;\}$
\end{mytabbing}
}}


\newcommand{\INSERT}{
\parbox[t]{2.5in}{
\begin{mytabbing}
aaa\=aaa\=aaa \kill
$\hinsert(x, h)$: \\
\>      \RETURN\ $\meld(x, h)$
\end{mytabbing}
}}

\newcommand{\DELETEMIN}{
\parbox[t]{2.5in}{
\begin{mytabbing}
aaa\=aaa\=aaa\=aaa \kill
$\deletemin(h)$: \\
\{ \>    $x \gets h.child$ \\
\>      $\maxrank \gets 0$ \\
\>      \WHILE\ $x\ne\NULL$: \\
\>      \{ \>     $y \gets x$ \\
\>\>             $x \gets x.\after$ \\
\>\>             \WHILE\ $A[y.\rank]\ne \NULL$: \\
\>\>             \{ \>   $y \gets \link(y, A[y.\rank])$ \\
\>\>\>                   $A[y.\rank] \gets \NULL$ \\
\>\>\>                   $y.\rank \gets y.\rank + 1$ $\;\}$ \\
\>\>             $A[y.\rank] \gets y$ \\
\>\>             \IF\ $y.\rank > \maxrank: \maxrank \gets y.\rank$ $\;\}$ \\
\>      \FOR\ $i \gets 0$ \TO\ $\maxrank$: \\
\>      \{ \>    \IF\ $A[i] \ne \NULL$: \\
\>\>             \{ \>   \IF\ $x = \NULL$: $x \gets A[i]$ \\
\>\>\>                   \ELSE: $x \gets link(x, A[i])$ \\
\>\>\>                   $A[i] \gets \NULL$ $\;\}$ $\;\}$ \\
\>      \RETURN\ x $\;\}$
\end{mytabbing}
}}


\newcommand{\MELD}{
\parbox[t]{2.5in}{
\begin{mytabbing}
aaa\=aaa\=aaa \kill
$\meld(g, h)$: \\
\{ \>   \IF\ $g = \NULL$: \RETURN\ $h$ \\
\>      \IF\ $h = \NULL$: \RETURN\ $g$ \\
\>      \RETURN\ $\link(g, h)$ $\;\}$
\end{mytabbing}
}}

\newcommand{\DELETE}{
\parbox[t]{2.5in}{
\begin{mytabbing}
aaa\=aaa\=aaa \kill
$\hdelete(x, h)$: \\
\{ \>   $\decreasekey(x, -\infty, h)$ \\
\>      \RETURN\ $\deletemin(h)$ $\;\}$
\end{mytabbing}
}}

\newcommand{\DECREASEKEY}{
\parbox[t]{2.5in}{
\begin{mytabbing}
aaa\=aaa\=aaa \kill
$\decreasekey(x, v, h)$: \\
\{ \>    $x.key \gets v$ \\
\>       \IF\ $x = h$: \RETURN\ $h$ \\
\>       $h.state \gets \unmarked$ \\
\>       $\decreaseranks(x)$ \\
\>       $\cut(x)$ \\
\>       \RETURN\ $\link(x, h)$ $\;\}$
\end{mytabbing}
}}


\begin{figure}[t]
\begin{center}
\parbox[t]{2.5in}{
\MAKEITEM

\MAKEHEAP

\INSERT\MELD\DELETE}
\parbox[t]{2.5in}{
\DELETEMIN%
\DECREASEKEY
}
\end{center}
\vspace*{-10pt}
\caption{Implementation of the heap operations}
\label{fig:pseudo-main}
\end{figure}



\newcommand{\DECREASERANKS}{
\parbox[t]{2.5in}{
\begin{mytabbing}
aaa\=aaa\=aaa \kill
$\decreaseranks(y)$: \\
%
%
\{\> \REPEAT:\\
\> \{\>  $y \gets y.parent$  \\
\>\>    \IF\ $y.\rank > 0$: $y.\rank \gets y.\rank - 1$ \\
\>\>     $y.state \gets \NOT\ y.state$ $\;\}$ \\
\> \UNTIL\ $y.state=\marked$ $\;\}$
\end{mytabbing}
}}

\newcommand{\DECREMENT}{
\parbox[t]{2.5in}{
\begin{mytabbing}
aaa\=aaa\=aaa \kill
$\decrement(\kk)$: \\
\>       \IF\ $\kk > 0$: $\kk \gets \kk - 1$
\end{mytabbing}
}}


\newcommand{\LINK}{
\parbox[t]{2.5in}{
\begin{mytabbing}
aaa\=aaa\=aaa \kill
$\link(x, y)$: \\
\{ \>    \IF\ $x.key > y.key$: \\
\> \{\> $\addchild(x, y)$ \\
\>\> \RETURN\ $y$ $\;\}$ \\
\> \ELSE: \\
\> \{\> $\addchild(y, x)$ \\
\>\> \RETURN\ $x$ $\;\}\;\}$
\end{mytabbing}
}}



\newcommand{\ADDCHILD}{
\parbox[t]{2.5in}{
\begin{mytabbing}
aaa\=aaa\=aaa \kill
$\addchild(x, y)$: \\
\{ \>    $x.parent \gets y$ \\
\>       $z \gets y.child$ \\
\>       $x.\before \gets \NULL$ \\
\>       $x.\after \gets z$ \\
\>       \IF\ $z \ne \NULL$: $z.\before \gets x$ \\
\>       $y.child \gets x$ $\;\}$
\end{mytabbing}
}}


\newcommand{\CUT}{
\parbox[t]{2.5in}{
\begin{mytabbing}
aaa\=aaa\=aaa \kill
$\cut(x)$: \\
\{ \> $y \gets x.parent$ \\
\> \IF\ $y.child = x$: $y.child \gets x.\after$ \\
\>       \IF\ $x.\before \ne \NULL$: $x.\before.\after \gets x.\after$ \\
\>       \IF\ $x.\after \ne \NULL$: $x.\after.\before \gets x.\before$ $\;\}$
\end{mytabbing}
}}

\begin{figure}[t]
\begin{center}
\parbox[t]{2.5in}{
\LINK\ADDCHILD}
\parbox[t]{2.5in}{
\CUT\DECREASERANKS
}
\end{center}
\vspace*{-10pt}
\caption{\label{fig:pseudo-auxiliary}Auxiliary functions used in the implementation of the heap operations}
\end{figure}


There are several generic changes that can be made to our implementation, or to an implementation of any similar data structure for heaps. One change is to reduce the number of pointers per node from 4 to 3 by replacing the \emph{before} and \emph{after} pointers by one pointer whose meaning depends on the type of node: a first child points to the next sibling of its parent; a child other than the first points to its previous sibling.  (See \cite{FrSeSlTa86}.)  This trades time for space.  More complicated representations allow further space reductions.

A second change is to modify the specification of the heap operations so that they update old heaps instead of returning new ones.  One can do this by introducing separate heap objects.  In a minimal implementation of  this idea, a heap object $H$ has a single field $H.root$ that is the root of the tree representing the heap, or null if the heap is empty.  A less minimal implementation would store other useful information, such as the heap size, in each heap object.  Figure~\ref{fig:pseudo-alternative} gives an implementation using heap objects.  The operation $Insert(x, H)$ replaces $h \gets insert(x, h)$, and similarly for the other operations.

\newcommand{\MAKEHEAPC}{
\begin{mytabbing}
aaa\=aaa\=aaa \kill
$\Makeheap()$: \\
\{ \> $H\gets$ new heap \\
\> $H.root \gets \NULL$ \\
\>   \RETURN\ $H$ $\;\}$
\end{mytabbing}
}

\newcommand{\HINSERTC}{
\begin{mytabbing}
aaa\=aaa\=aaa \kill
$\Hinsert(x,H)$: \\
\> $H.root \gets \meld(x,H.root)$
\end{mytabbing}
}
\newcommand{\MELDC}{
\begin{mytabbing}
aaa\=aaa\=aaa \kill
$\Meld(H_1,H_2)$: \\
\> $H_1.root \gets \meld(H_1.root,H_2.root)$
\end{mytabbing}
}

\newcommand{\FINDMINC}{
\begin{mytabbing}
aaa\=aaa\=aaa \kill
$\Findmin(H)$: \\
 \>       \RETURN\ $H.\ROOT$
\end{mytabbing}
}

\newcommand{\DELETEMINC}{
\begin{mytabbing}
aaa\=aaa\=aaa \kill
$\Deletemin(H)$: \\
 \>        $H.\ROOT\gets \deletemin(H.root)$
\end{mytabbing}
}

\newcommand{\DECREASEKEYC}{
\begin{mytabbing}
aaa\=aaa\=aaa \kill
$\Decreasekey(H,x,k)$: \\
\>        $H.\ROOT \gets \decreasekey(x,k,H.\ROOT)$
\end{mytabbing}
}

\newcommand{\HDELETEC}{
\begin{mytabbing}
aaa\=aaa\=aaa \kill
$\Hdelete(H,x):$ \\
 \>   $H.\ROOT \gets \hdelete(x,H.root)$
\end{mytabbing}
}

\begin{figure}[t]
\begin{center}
\parbox[t]{2.5in}{
\MAKEHEAPC\MELDC\HINSERTC}
\hspace*{15pt}
\parbox[t]{2.5in}{
\FINDMINC\DELETEMINC\DECREASEKEYC\HDELETEC
}
\end{center}
\vspace*{-10pt}
\caption{\label{fig:pseudo-alternative}An implementation that modifies existing heaps.}
\end{figure}

A third change is to modify the implementation to do links lazily, only when needed to answer  \findmin\ queries. With this change, a heap consists of a set of heap-ordered trees, represented by a list of roots.  A \meld\ catenates the lists representing the two input heaps, without doing any links.  A \decreasekey\ of a non-root node $x$ adds to the list of roots instead of doing a link.  A \deletemin\ does a \findmin, which converts the list of roots into a single root, and then deletes the root, leaving its list of children as the new list of roots.  A \findmin\ links roots as in the implementation of \deletemin\ in Section~\ref{sec:simple}, but without first deleting a node; once all roots are linked, it returns the one remaining root.  The $O(1)$ time bound for \findmin\ becomes amortized instead of worst-case. As mentioned in Section \ref{sec:analysis}, the original version of Fibonacci heaps also represents the heap by a list of roots, but unlike the lazy implementation just described it maintains a pointer to a root of minimum key (making the time for \findmin\ $O(1)$ worst-case). Also, it does only \fair\ links, and it does all its links in \deletemin, not \findmin.

These three changes are independent of each other, so they can be combined arbitrarily.

We conclude this section by mentioning three heuristics that can be added to \decreasekey\ to reduce the number of iterations of the rank-decreasing loop in certain cases. These heuristics are independent and hence can be combined arbitrarily. Whether they provide improved performance is a question for experiments; none improves the $O(1)$ amortized time bound for \decreasekey, and they all complicate the implementation at least slightly, but they do preserve the analysis of Section~\ref{sec:analysis}. Furthermore, they provide ideas for actually simplifying the rank-decreasing process that we explore in Section~\ref{sec:simpler}.

The \emph{heap-order heuristic} cuts $x$ in $\decreasekey(x,v,h)$ only if $x$ is not a root \emph{and} its new key, $v$, is less than that of its parent, violating heap order.  As noted in Section \ref{sec:simple}, the original version of Fibonacci heaps uses this heuristic. A possible disadvantage of this heuristic is that the outcomes of such comparisons are not encoded in the data structure.

We can combine the two tests of whether $x$ is a child and whether $x$ violates heap order into a single test by defining the parent of a root to be itself.  Then $x.key < x.parent.key$ if and only if $x$ is a child that violates heap order.  Maintaining parents of roots requires initializing them in \makeitem\ and updating them in \cut.  In contrast, the implementation in Figures \ref{fig:pseudo-main} and \ref{fig:pseudo-auxiliary}
does not need to maintain parents of roots, since it never uses them.

The \emph{increasing-rank heuristic} stops the rank-decreasing loop when it reaches a node whose old rank is no less than that of its parent.  This heuristic reduces the worst-case time of \decreasekey\ to $O(\log n)$, since nodes visited in the rank-decreasing loop strictly increase in old rank.  To implement the heuristic, replace the implementations of \decreasekey\ and \decreaseranks\ by the implementations in Figure~\ref{fig:rank-increasing}.

\newcommand{\DECREASEKEYA}{
\parbox[t]{3in}{
\begin{mytabbing}
aaa\=aaa\=aaa \kill
$\decreasekey(x, v, h)$: \\
\{ \>    $x.key \gets v$ \\
\>       \IF\ $x = h$: \RETURN\ $h$ \\
\>       \IF\ $x.rank < x.parent.rank$: \\
\> \{\>       $h.state \gets \unmarked$ \\
\> \>        $\decreaseranks(x)$ $\;\}$ \\
\>       $\cut(x)$ \\
\>       \RETURN\ $\link(x, h)$ $\;\}$
\end{mytabbing}
}}

\newcommand{\DECREASERANKSA}{
\parbox[t]{3in}{
\begin{mytabbing}
aaa\=aaa\=aaa \kill
$\decreaseranks(y)$: \\
\{ \> \REPEAT:\\
\> \{\>  $y \gets y.parent$  \\
\>\>     $y.state \gets \NOT\ y.state$ \\
\>\>     $\kk \gets  y.rank$ \\
\>\>     $y.rank \gets  \kk-1$ $\;\}$ \\
\> \UNTIL\ $y.state=\marked$  \OR\ $\kk \ge y.parent.rank$ $\;\}$
\end{mytabbing}
}}

\begin{figure}[htb]
\begin{center}
\parbox[t]{3in}{
\DECREASEKEYA}
\parbox[t]{3in}{
\DECREASERANKSA}
\end{center}
\vspace*{-10pt}
\caption{\label{fig:rank-increasing} Implementation of \decreasekey\ using the increasing-rank heuristic.}
\end{figure}

The \emph{passive child heuristic} stops the rank-decreasing loop when it reaches a passive child (as defined in Section~\ref{sec:analysis}). To implement this heuristic, give each node one of three possible states: \emph{passive}, \emph{unmarked} (and active), or \emph{marked} (and active).
 Make roots implicitly passive.  Modify \link\ to make the new child passive, modify \deletemin\ to make each child added by a \fair\ link unmarked, and replace the implementations of \decreasekey\ and \decreaseranks\ by the implementations in Figure \ref{fig:passive-child}.

\newcommand{\DECREASEKEYP}{
\parbox[t]{3in}{
\begin{mytabbing}
aaa\=aaa\=aaa \kill
$\decreasekey(x, v, h)$: \\
\{ \>    $x.key \gets v$ \\
\>       \IF\ $x = h$: \RETURN\ $h$ \\
\>       $h.state \gets \passive$ \\
\>       \IF\ $x.state \not= \passive$: $\decreaseranks(x)$ \\
\>       $\cut(x)$ \\
\>       \RETURN\ $\link(x, h)$ $\;\}$
\end{mytabbing}
}}

\newcommand{\DECREASERANKSP}{
\parbox[t]{3in}{
\begin{mytabbing}
aaa\=aaa\=aaa \kill
$\decreaseranks(y)$: \\
\{ \> \WHILE\ $y.state=\marked$:\\
\> \{\>  $y.state \gets \passive$  \\
\>\>     $y.rank \gets y.rank-1$ \\
\>\>     $y \gets  y.parent$  $\;\}$ \\
\>    $y.rank \gets  y.rank-1$ \\
\> \IF\ $y.state=\unmarked$: $y.state \gets \marked$ $\;\}$
\end{mytabbing}
}}

\begin{figure}[htb]
\begin{center}
\parbox[t]{3in}{
\DECREASEKEYP}
\hspace*{15pt}
\parbox[t]{3in}{
\DECREASERANKSP}
\end{center}
\vspace*{-10pt}
\caption{\label{fig:passive-child} Implementation of \decreasekey\ using the passive child heuristic.}
\end{figure}


The passive child heuristic provides a bridge from the simple Fibonacci heaps of Section~\ref{sec:simple} to the original version \cite{FrTa87}. Indeed, we originally included this heuristic in our data structure, and then omitted it to obtain the version in Section~\ref{sec:simple}. Simple Fibonacci heaps with the heap-order and passive-child heuristics correspond exactly to the original version of Fibonacci heaps.  To obtain the data structure of the latter from that of the former, cut each passive child, producing a forest.  To map operations of the former to those of the latter, omit all the \naive\ links, and for each node~$y$ unmarked by \decreaseranks, cut~$y$.  With the potential function redefined as in the remark at the end of Section \ref{sec:analysis}, the analysis of the former corresponds exactly to that of the latter.

\section{Cascading is Required}\label{sec:cascading}

We have replaced the cascading cuts of Fibonacci heaps with cascading rank decreases.  Fredman \cite{Fredman05} has asked whether one can avoid cascading altogether.  Specifically, consider the following simplification of our data structure: keep ranks but eliminate node marking.  To decrease the key of a non-root~$x$, merely cut~$x$ from its parent, decrease the rank of the parent by 1 if its rank is positive, and link~$x$ and the original root by a \naive\ link.  We call this data structure a \emph{non-cascading heap}.  We show how to build a non-cascading heap of~$n$ nodes on which one \hinsert\ followed by one \deletemin\ reproduces the original structure, with the \deletemin\ taking $\Omega(n^{1/2})$ time.  Such pairs of \hinsert\ and \deletemin\ operations can be repeated indefinitely.  The number of operations needed to build the heap is $O(n^{3/2})$.  It follows that a sequence of~$m$ operations starting with no heaps can take $\Omega(m^{4/3})$ time.

Define~$S_\kk$ for a non-negative integer~$\kk$ to be a tree of $\kk+1$ nodes consisting of a root of rank~$\kk$ with~$\kk$ children. Define $T_\kk(i)$ for a positive integer $\kk$ and a non-negative integer $i\le \kk$ to be a tree whose root has~$\kk$ children that are the roots of $S_0,S_1,\ldots,S_\kk$, with $S_i$ missing. Thus $T_\kk(i)$ for $0<i<k$ consists of a root whose $\kk$ children are the roots of
$S_0,\ldots,S_{i-1},S_{i+1},\ldots, S_\kk$. See Figure~\ref{fig:Tji}. As a special case, $T_\kk(\kk)$ consists of a root whose $\kk$ children are the roots of $S_0, S_1,..., S_{\kk - 1}$.  Tree $T_\kk(\kk)$ has size $1 + \kk(\kk - 1)/2$. Given a heap whose tree is~$T_\kk(\kk)$, one \hinsert\ of an item with key larger than that of the root but smaller than that of all other nodes in the tree, followed by one \deletemin, will produce a new copy of~$T_\kk(\kk)$, the \deletemin\ having done~$\kk$ fair links and spent $\Omega(\kk)$ time.  Thus to get a bad example all we need to do is build a~$T_\kk(\kk)$ for large~$\kk$.

\begin{figure} [htbp]
\begin{center}
\includegraphics[scale=0.35]{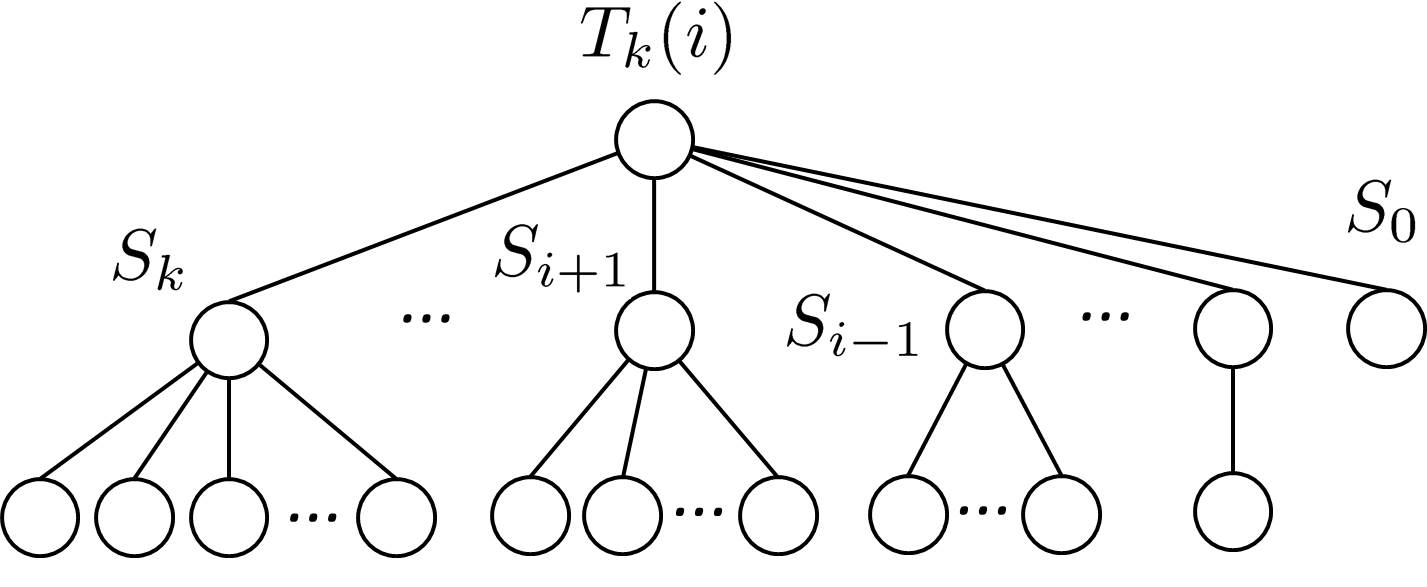}
\end{center}
\vspace*{-10pt}
\caption{The tree $T_\kk(i)$, for $1<i<k$.  \label{fig:Tji}}
\end{figure}

We can convert~$T_\kk(0)$ into~$T_{\kk + 1}(\kk + 1)$ by inserting a new item whose key is greater than that of the root. We can convert $T_\kk(i)$ with $i > 0$ into $T_\kk(i - 1)$ by doing two insertions, one \deletemin, and $i - 1$ \decreasekey\ operations, as follows. We insert two items whose keys are larger than that of the root. Then we do a \deletemin. After deletion of the original root, there are three roots of rank 0: the two newly inserted items and the root of the original~$S_0$. We link one of the newly inserted nodes with the root of the original~$S_0$ by a \fair\ link and then successively link the resulting tree with $S_1$, $S_2$, $\ldots$, $S_{i - 1}$ by \fair\ links.  We choose the keys so that among the roots linked, the root of $S_{i - 1}$ has smallest key and one of the newly inserted nodes has second-smallest key.  Then the tree formed by the links is $S_i$ but with one of its leaves having children that are the roots of $S_0$, $S_1$,$\ldots$, $S_{i - 2}$.  We then combine the remaining trees, which include one singleton, by \naive\ links that make the singleton the final root and all the other roots its children.  Finally we do $i - 1$ \decreasekey\ operations on the roots of $S_0$, $S_1$,$\ldots$, $S_{i - 2}$, making their roots into children of the final root.  This produces $T_\kk(i - 1)$. See Figure \ref{fig:Tji-toTji-1}.

\begin{figure} [htbp]
\begin{center}
\includegraphics[scale=0.35]{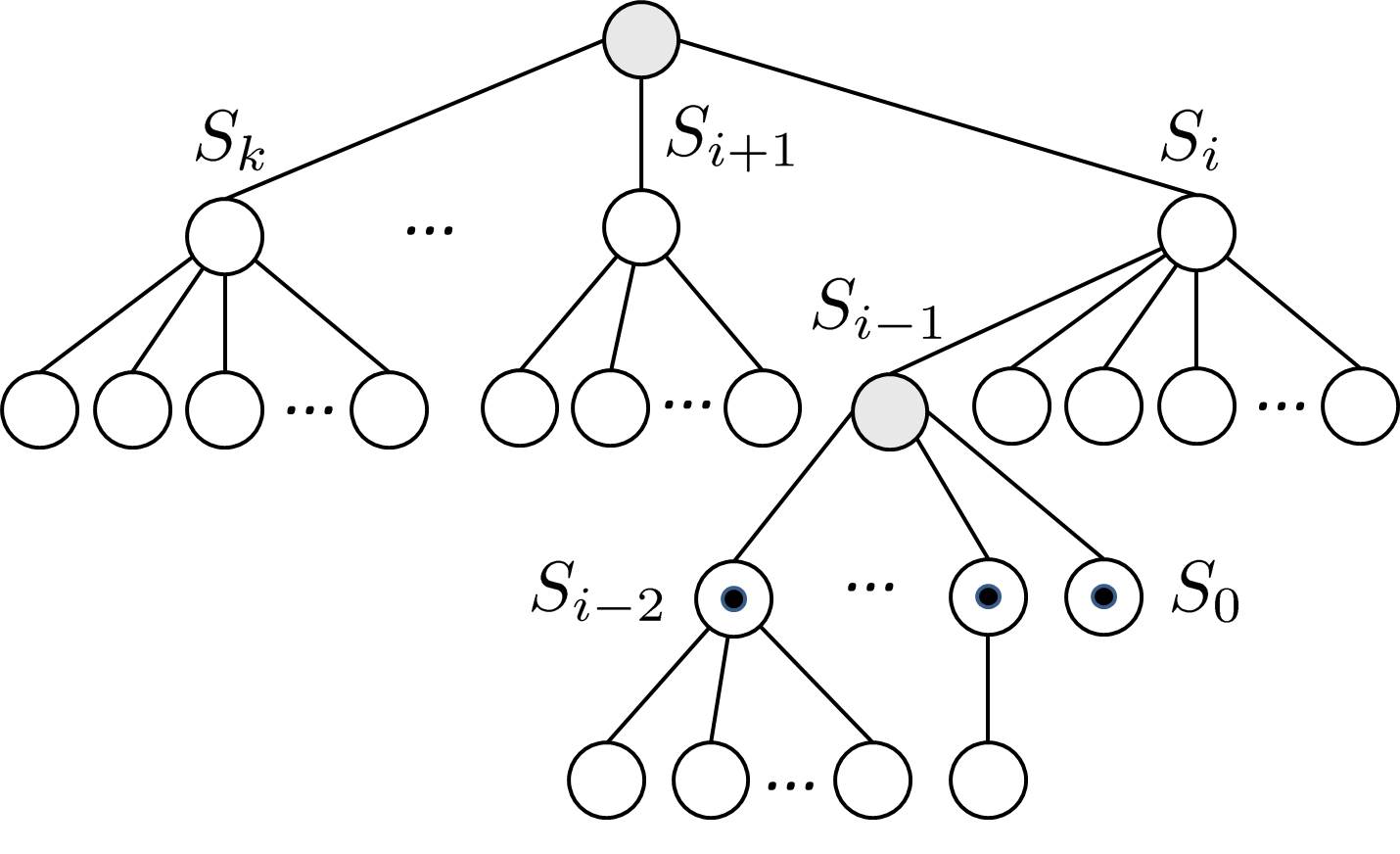}
\end{center}
\vspace*{-10pt}
\caption{The tree obtained when transforming $T_\kk(i)$ to $T_\kk(i-1)$, after two insertions and a \deletemin. The last step, which is not shown, is to do $i - 1$ \decreasekey\ operations on the roots of $S_0$, $S_1$,$\ldots$, $S_{i - 2}$ marked with a dot.
\label{fig:Tji-toTji-1}}
\end{figure}

By induction, we can convert $T_\kk(\kk)$ into $T_{\kk+1}(\kk+1)$ in $O(\kk^2)$ heap operations, and we can build
$T_\kk(\kk)$ from scratch in $O(\kk^3)$ heap operations.

\begin{theorem}
For any~$n$, there is a non-cascading heap of~$n$ nodes on which an \hinsert\ followed by a \deletemin\ can take $\Omega(n^{1/2})$ time and reproduce the original structure.  Such a heap can be built from scratch in $O(n^{3/2})$ operations.  For any~$m$ there is a sequence of $m$ operations starting from no heaps that takes $\Omega(m^{4/3})$ time.
\end{theorem}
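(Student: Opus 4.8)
The hard instances are the trees $T_k(k)$ introduced before the theorem, with $k$ of order $\sqrt n$. Since $|T_k(k)| = \Theta(k^2)$ runs through only a sparse set of integers, I would take $k$ to be the largest integer with $|T_k(k)| \le n$ and then pad the heap up to exactly $n$ nodes by a routine modification that changes none of the ranks that matter below — for instance, hang the leftover $n - |T_k(k)| < k$ nodes as extra leaf-children of the (rank-$0$) root of the subtree $S_0$. In the rest of this sketch I write $n$ for $|T_k(k)|$, so that $k = \Theta(\sqrt n)$.

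\emph{First claim.} On a heap whose tree is $T_k(k)$, \hinsert\ an item $z$ whose key exceeds the root's but is smaller than every other key in the tree; the \naive\ link performed by the insertion makes $z$ a rank-$0$ child of the root. A subsequent \deletemin\ first removes the root, leaving the roots $z$ (of rank $0$) and the roots of $S_0, \ldots, S_{k-1}$ (of ranks $0, 1, \ldots, k-1$). The only pair of equal rank is $\{z,\text{ root of }S_0\}$; because $z$ has the smallest key among these roots it wins the \fair\ link and its rank rises to $1$; now the only equal-rank pair is $\{z,\text{ root of }S_1\}$, and again $z$ wins; inductively, after $k$ \fair\ links — and no \naive\ link — $z$ is a rank-$k$ root whose $k$ children are exactly the roots of $S_0, \ldots, S_{k-1}$, i.e.\ the tree is again a copy of $T_k(k)$. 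The \deletemin\ performed $k = \Theta(\sqrt n)$ links, so (by the estimates of Section~\ref{sec:analysis} and the implementation of Section~\ref{sec:implementation}) it took $\Omega(\sqrt n)$ time; and since the configuration is reproduced, this \hinsert/\deletemin\ pair may be repeated indefinitely.

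\emph{Second claim.} I would show by induction on $k$ that $T_k(k)$ can be built from scratch in $O(k^3) = O(n^{3/2})$ operations, using the two reductions described before the theorem: a single \hinsert\ turns $T_k(0)$ into $T_{k+1}(k+1)$, and for $0 < i \le k$ a block of two \hinsert\ operations, one \deletemin, and $i - 1$ \decreasekey\ operations — in all $O(k)$ operations — turns $T_k(i)$ into $T_k(i-1)$. Hence the chain $T_k(k) \to T_k(k-1) \to \cdots \to T_k(0) \to T_{k+1}(k+1)$ costs $O(k^2)$ operations, so starting from the trivially built $T_1(1)$ (two \hinsert\ operations) one reaches $T_k(k)$ in $\sum_{j=1}^{k} O(j^2) = O(k^3)$ operations.

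\emph{Third claim.} Given $m$, choose $k = \Theta(m^{1/3})$ small enough that building $T_k(k)$ — a heap of $n = \Theta(k^2) = \Theta(m^{2/3})$ nodes — uses at most $m/2$ operations, and fill the remaining budget with $\Theta(m)$ \hinsert/\deletemin\ gadget pairs, each costing $\Omega(\sqrt n) = \Omega(k) = \Omega(m^{1/3})$ time; the whole length-$m$ sequence then starts from no heaps (a \makeheap, the construction, then the repeated gadget) and runs in $\Omega(m \cdot m^{1/3}) = \Omega(m^{4/3})$ time. The step I expect to demand the most care is the bookkeeping inside the $T_k(i) \to T_k(i-1)$ block: because the heap is non-cascading, its $i - 1$ \decreasekey\ operations both reset keys and decrement the ranks of the parents from which nodes are cut, so one has to verify that after the block every subtree is again exactly some $S_j$ of the intended rank, and that the key order among the subtree roots needed by the next block can be reinstated by the very keys those \decreasekey\ operations are free to assign. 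Everything else — $|T_k(k)| = \Theta(k^2)$, block length $O(k)$, total build cost $O(k^3)$, and the choice $k = \Theta(m^{1/3})$ — is routine.
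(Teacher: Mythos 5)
Your proposal is correct and follows essentially the same route as the paper: use the $T_\kk(\kk)$ family, note that the \hinsert/\deletemin\ gadget redoes $\kk=\Theta(\sqrt n)$ fair links while reproducing the structure, build $T_\kk(\kk)$ in $O(\kk^3)$ operations via the $T_\kk(i)\to T_\kk(i-1)$ and $T_\kk(0)\to T_{\kk+1}(\kk+1)$ conversions, and take $\kk=\Theta(m^{1/3})$ for the $\Omega(m^{4/3})$ bound. Your padding remark for arbitrary $n$ and your explicit tracing of the fair-link cascade are small additions the paper leaves implicit, but the argument is the same.
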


\begin{proof}
The first two parts of the theorem are immediate from the construction above.  To prove the third part, assume $m\ge 6$. Let~$\kk$ be maximum such that~$T_\kk$ can be built in at most $m/3$ operations.  Then $\kk = \Omega(m^{1/3})$ by the construction above.  Build~$T_\kk$ in at most $m/3$ operations and then alternate \hinsert\ and \deletemin\ operations, with each pair recreating~$T_\kk$ and taking $\Omega(\kk)$ time.
\end{proof}

Essentially the same example shows that if cascading cutting is eliminated from the original version of Fibonacci heaps, the resulting data structure can take fractional-polynomial time per operation. This shows that the answer to Fredman's question is ``No''.

\vspace*{-5pt}
\section{Simpler Cascading?}\label{sec:simpler}

Given that cascading rank decreases are necessary, it is natural to ask whether there is a simpler way to decide when to do them.  We offer four possible methods, three deterministic and one randomized.  We cannot yet fully analyze any of these methods, and we leave doing so as intriguing open problems.

Two of our three deterministic methods simplify two of the heuristics in Section 4.  {\em Eager marking} simplifies the passive child heuristic by making marked nodes and active nodes identical.  Roots are implicitly unmarked.  A node that becomes a child by a fair link becomes marked.  The rank-decreasing loop stops when it reaches an unmarked node, which it leaves unmarked.  To implement this method, modify \link\ to make the new child unmarked, modify \deletemin\ to make each child added by a fair link marked, and replace the implementation of \decreaseranks\ by the implementation in Figure~\ref{fig:eager-marking}, given in the appendix.

Our other two deterministic methods eliminate marking entirely.  The {\em \naive\ increasing-rank method} stops the rank-decreasing loop when it reaches a node whose old rank is no less than that of its parent.  The {\em zero-rank method} stops the rank-decreasing loop when it reaches a node whose rank is~0.  To implement these methods, eliminate node states and replace the implementations of \decreasekey\ and \decreaseranks\ by those in Figures 9 and 10, respectively, given in the appendix.

It is straightforward to prove that with each of these methods, a node of size $n$ has rank at most $\lg n$.  It follows that the worst-case time of \decreasekey\ with the naive increasing-rank method is $O(\log n)$, and that the bounds of Theorem~\ref{thm:amort} hold for this method except that the amortized time per \decreasekey\ is $O(\log n)$.  For the other two methods we cannot prove even this weaker result.  We conjecture that Theorem~\ref{thm:amort} holds for none of these methods, but we have no counterexamples.

Another way to eliminate marking is to use {\em randomized cascading}, which at each iteration of the rank-decreasing loop stops with probability $1/2$.  To implement this method, eliminate node stated and replace the implementations of \decreasekey\ and \decreaseranks\ by those in Figure~\ref{fig:randomized}, given in the appendix.

Randomized cascading is a variant of a method proposed by Karger \cite{Kar06private}, {\em randomized cutting}, which applies to the original version of Fibonacci heaps \cite{FrTa87}.  It does not use marking; instead, at each iteration of the cascading-cut loop it stops with probability $1/2$.  With either randomized cascading or randomized cutting, \decreasekey\ takes $O(1)$ expected time.  In recent work independent of ours, Li and Peebles \cite{LiPeebles14} have given a tight analysis of randomized cutting.  They showed that the expected amortized time of \deletemin\ is $\Theta(\min \{n^{1/2}, (\log n\log s)/\log\log s \})$, where $s$ is the total number of \decreasekey\ operations.  Thus randomized cutting does not provide the efficiency of Fibonacci heaps.  Their lower bound example is  similar to our example in Section \ref{sec:cascading}.  Their upper bound analysis applies in a straightforward way to randomized cascading, but their lower bound does not (as far as we can tell).  Thus it remains open whether randomized cascading achieves the efficiency of Fibonacci heaps (in expectation).

%

\vspace*{-5pt}
\section{Remarks}\label{sec:remarks}

We have presented a one-tree version of Fibonacci heaps in which each \decreasekey\ requires only a single cut.  Ours is not the only heap structure with these properties.  For example, rank-pairing heaps \cite{HaSeTa0011} were specifically designed to need only one cut per \decreasekey, and there is a one-tree version of these heaps \cite{HaSeTa0011}.  But among all the known data structures that have the same amortized time bounds as Fibonacci heaps, we think ours is the simplest.  It has the additional nice property that the outcomes of all key comparisons are represented in the data structure (via links).  The pairing heap \cite{FrSeSlTa86}, a simpler self-adjusting structure, is faster in practice, but it does not support \decreasekey\ in $O(1)$ amortized time \cite{Fredman99}.  Indeed, obtaining a tight amortized time bound for \decreasekey\ in pairing heaps is an interesting open problem \cite{Fredman99,Pettie05}.

We have shown that without cascading rank changes, our data structure does not have the efficiency of Fibonacci heaps, solving an open problem of Fredman \cite{Fredman05}. We have proposed four simplified cascading methods. We leave the analysis of these methods as intriguing open problems.


\bibliographystyle{plain}

\makeatletter
\def\runninghead{\hrulefill\quad APPENDIX\quad\hrulefill}
\def\ps@headings{
\def\@oddhead{\footnotesize\rm\hfill\runninghead\hfill}}
\def\@evenhead{\@oddhead}
\def\@oddfoot{\rm\hfill\thepage\hfill}\def\@evenfoot{\@oddfoot}
\makeatother

\newpage
\setlength{\headsep}{15pt} \pagestyle{headings}

\appendix

\begin{figure}[t]
\begin{center}
\parbox[t]{2.5in}{
\begin{mytabbing}
aaa\=aaa\=aaa \kill
$\decreaseranks(y)$: \\
\> \WHILE\ $y.state=\marked$:\\
\> \{\>  $y.state \gets \unmarked$ \\
\>\>         $y \gets y.parent$ \\
\>\>         $y.rank \gets y.rank - 1$  $\;\}$
\end{mytabbing}
}
\end{center}
\vspace*{-10pt}
\caption{Implementation of \decreaseranks\ using eager marking.}
\label{fig:eager-marking}
\end{figure}

\begin{figure}[t]
\begin{center}
\parbox[t]{3in}{
\begin{mytabbing}
aaa\=aaa\=aaa \kill
$\decreasekey(x, v, h)$: \\
\{\>    $x.key \gets v$ \\
\>      \IF\ $x = h$: \RETURN\ $h$ \\
\>      \IF\ $x.\rank < x.parent.\rank$: $\decreaseranks(x, h)$ \\
\>      $\cut(x)$ \\
\>      \RETURN\ $\link(x, h)$ $\;\}$
\end{mytabbing}
}
\hspace*{10pt}
\parbox[t]{3in}{
\begin{mytabbing}
aaa\=aaa\=aaa \kill
$\decreaseranks(y, h)$: \\
\{\>    \REPEAT: \\
\> \{ \>    $y \gets y.parent$ \\
\>\>        $\kk \gets y.\rank$ \\
\>\>        $y.\rank \gets \kk - 1$ $\;\}$ \\
\>   \UNTIL\ $y = h$ or $\kk \ge y.parent.\rank$ $\;\}$
\end{mytabbing}
}
\end{center}
\vspace*{-10pt}
\caption{Implementation of \decreasekey\ using the \naive\ decreasing-rank method.}
\label{fig:naive-dec}
\end{figure}


\begin{figure}[t]
\begin{center}
\parbox[t]{3in}{
\begin{mytabbing}
aaa\=aaa\=aaa \kill
$\decreasekey(x, v, h)$: \\
\{\>    $x.key \gets v$ \\
\>      \IF\ $x = h$: \RETURN\ $h$ \\
\>      \IF\ $x.parent.\rank > 0$: $\decreaseranks(x, h)$ \\
\>      $\cut(x)$ \\
\>      \RETURN\ $\link(x, h)$ $\;\}$
\end{mytabbing}
}
\hspace*{10pt}
\parbox[t]{3in}{
\begin{mytabbing}
aaa\=aaa\=aaa \kill
$\decreaseranks(y, h)$: \\
\{\>    \REPEAT\: \\
\>\{\>    $y \gets y.parent$ \\
\>\>        $y.\rank \gets y.\rank - 1$ $\;\}$ \\
\>   \UNTIL\ $y = h$ or $y.parent.\rank = 0$ $\;\}$
\end{mytabbing}
}
\end{center}
\vspace*{-10pt}
\caption{Implementation of \decreasekey\ using the zero-rank method.}
\label{fig:zero-rank}
\end{figure}

\newcommand{\DECREASEKEYR}{
\parbox[t]{2.5in}{
\begin{mytabbing}
aaa\=aaa\=aaa \kill
$\decreasekey(x, v, h)$: \\
\{ \>    $x.key \gets v$ \\
\>       \IF\ $x = h$: \RETURN\ $h$ \\
\>      $\decreaseranks(x)$ \\
\>       $\cut(x)$ \\
\>       \RETURN\ $\link(x, h)$ $\;\}$
\end{mytabbing}
}}

\newcommand{\DECREASERANKSR}{
\parbox[t]{2.5in}{
\begin{mytabbing}
aaa\=aaa\=aaa \kill
$\decreaseranks(y)$: \\
\{ \> \REPEAT: \\
\> \{ \>        \IF\ $y.rank >0$: $y.rank\gets y.rank - 1$ \\
\>\>     $y \gets y.parent$  $\;\}$ \\
\>  \UNTIL\ $y=h$  \OR\ $\heads()$  $\;\}$
\end{mytabbing}
}}

\begin{figure}[t]
\begin{center}
\parbox[t]{2.5in}{
\DECREASEKEYR}
\hspace*{15pt}
\parbox[t]{2.5in}{
\DECREASERANKSR}
\end{center}
\vspace*{-10pt}
\caption{\label{fig:randomized}Implementation of \decreasekey\ using randomization.  Each call of function $\heads()$ returns a Boolean value that is true with probability $1/2$ and false with probability $1/2$, independent of the values returned by previous calls.}
\end{figure}

\end{document}